\newcommand{\IC}{\mathbb C}
\newcommand{\tensor}{{\textstyle\bigotimes}}
\newcommand{\la}{\lambda}
\newcommand{\aS}{\mathfrak S}
\newcommand{\VNP}{\textup{VNP}}
\newcommand{\VP}{\textup{VP}}
\DeclareMathOperator{\SL}{SL}
\DeclareMathOperator{\SU}{SU}
\DeclareMathOperator{\GL}{GL}
\DeclareMathOperator{\Sym}{Sym}
\DeclareMathOperator{\per}{per}
\DeclareMathOperator{\sgn}{sgn}
\DeclareMathOperator{\Pf}{Pf}
\numberwithin{equation}{section}
\newtheorem{proposition}[equation]{Proposition}
\newtheorem{theorem}[equation]{Theorem}
\newtheorem{conjecture}[equation]{Conjecture}
\theoremstyle{definition}
\newtheorem{definition}[equation]{Definition}
\begin{document}
\title{Hyperpfaffians and Geometric Complexity Theory}
\author{Christian Ikenmeyer\thanks{University of Liverpool, christian.ikenmeyer@liverpool.ac.uk, supported by DFG grant IK 116/2-1}\ \ and Michael Walter\thanks{University of Amsterdam, m.walter@uva.nl, supported by NWO Veni grant 680-47-459}}
\date{December 19, 2019}
\maketitle

\begin{abstract}
The hyperpfaffian polynomial was introduced by Barvinok in 1995 as a natural generalization of the well-known Pfaffian polynomial to higher order tensors.
We prove that the hyperpfaffian is the unique smallest degree SL-invariant on the space of higher order tensors.
We then study the hyperpfaffian's computational complexity and prove that it is VNP-complete.
This disproves a conjecture of Mulmuley in geometric complexity theory about the computational complexity of invariant rings.
\end{abstract}

\section{Algebraic Complexity Theory: VP versus VNP}
The main model of computation that we are concerned with in this paper is the model of arithmetic circuits (see~\cite{BCS:97} or~\cite{Sap:17} for an introduction).
Throughout the paper we take the complex numbers~$\IC$ as our ground field.
An \emph{arithmetic circuit}~$C$ is a directed acyclic graph whose vertices of indegree 0 are labeled with variable names or constants from the ground field; all other vertices are labeled with either~``$+$'' or~``$\times$'', and there is the additional restriction that there is exactly one vertex of outdegree 0.
An arithmetic circuit~$C$ computes a polynomial at each vertex, by induction over the circuit structure.
We say that~$C$ \emph{computes} the polynomial that is computed at its outdegree 0 vertex.
The \emph{size} of an arithmetic circuit is its number of vertices.
For a multivariate polynomial~$f$ let~$L(f)$ denote the minimum size required to compute~$f$ with an arithmetic circuit.
We call~$L(f)$ the \emph{circuit complexity} of~$f$.

A sequence of natural numbers~$(c_i)_i$ is called \emph{polynomially bounded} if there exists a univariate polynomial~$p$ such that~$c_i\leq p(i)$ for all~$i$.
If, for a sequence of multivariate polynomials~$f_i$, the sequence of degrees and the sequence of numbers of variables of the $f_i$ is polynomially bounded, then we call $f=(f_i)_i$ a \emph{p-family}.
For example, the \emph{permanent polynomial family}
\[
\per_n := \sum_{\pi \in \aS_n} \prod_{i=1}^n x_{i,\pi(i)}
\]
is a p-family, where $\aS_n$ denotes the symmetric group on $n$ letters and the $n^2$ variables are doubly indexed.
The complexity class VP consists of all p-families $f=(f_i)_i$ with polynomially bounded circuit complexity $(L(f_i))_i$.

A \emph{projection} of a polynomial is defined as its evaluation at a point parametrized by affine linear polynomials.
For example, the polynomial $xz+xy+x+y+1$ is a projection of $\per_2$, because
\[
\per_2\begin{pmatrix}
       x & y+1\\
       x+1 & z\\
      \end{pmatrix} = xz+(x+1)(y+1) = xz+xy+x+y+1.
\]
Valiant proved that every polynomial is the projection of $\per_n$ for $n$ large enough~\cite{Val:79}.
The smallest $n$ required for this is called the \emph{permanental complexity} of a polynomial~$f$.
The class VNP is the class of p-families whose permanental complexity is polynomially bounded.

A p-family $f = (f_i)_i$ is called VNP-\emph{complete} if $f\in\VNP$ and there exists a polynomially bounded sequence~$c$ such that $\per_i$ is a projection of $f_{c_i}$.
If for a VNP-complete polynomial $f$ we have that $f \in \VP$, then $\VP=\VNP$.
Valiant famously conjectured that $\VP\neq\VNP$, which is the flagship conjecture of algebraic complexity theory and an algebraic analog of the famous $\text{P}\neq\text{NP}$ conjecture.
Geometric complexity theory is an approach towards proving that~$\VP\neq\VNP$.

\section{Invariant Rings and Symmetric Tensors}\label{sec:invariants}
Let $\SL_n$ denote the special linear group, consisting of complex $n\times n$-matrices with unit determinant.
It acts canonically on $\IC^n$ by matrix-vector multiplication.
This action extends to any $m$-th tensor power $\tensor^m \IC^n$ by
\[
g(v_1 \otimes \cdots \otimes v_m) := (g v_1) \otimes \cdots \otimes (g v_m)
\]
and linear continuation.
We will always use the standard inner product on $\tensor^m \IC^n$, which is invariant under the action of the subgroup~$\SU_n$ of~$\SL_n$.

Let~$W$ be an arbitrary finite-dimensional $\SL_n$-representation (such as $W = \tensor^m\IC^n$).
Then $\SL_n$ also acts on the vector space of homogeneous degree-$d$ polynomial functions~$\IC[W]_d$ on $W$, via the canonical pullback
\[
(g f)(w) := f(g^t w),
\]
as well as on the ring of all polynomial functions~$\IC[W] = \bigoplus_{d=0}^\infty \IC[W]_d$.
A function $f \in \IC[W]$ is called \emph{invariant} if $\forall g \in \SL_n$ we have $gf=f$.
The \emph{invariant ring} $\IC[W]^{\SL_n}$ of~$W$ is defined as the ring of all invariants in $\IC[W]$.
Hilbert's finiteness theorem implies that $\IC[W]^{\SL_n}$ is finitely generated.

It is convenient to identify polynomial functions with symmetric tensors.
Note that $\SL_n$ acts canonically on any $d$-th tensor power $\tensor^d W$ of~$W$.
This action restricts to the $d$-th symmetric tensor power~$\Sym^d W$, i.e., the $\aS_d$-invariant subspace of $\tensor^d W$.
For any~$t \in \Sym^d W$, we can define a homogeneous degree-$d$ polynomial~$f\in\IC[W]_d$ by~$f(w) := \langle t, w^{\otimes d}\rangle$.
Here we use the inner product on $\Sym^d W$ induced by an $\SU_n$-invariant inner product on~$W$.
Then, $f$ is invariant if and only if the symmetric tensor~$t$ is invariant, i.e., if $\forall g \in \SL_n$ we have~$gt=t$.
We will tacitly go back and forth between symmetric tensors in $\Sym^d \tensor^m \IC^n$ and homogeneous polynomials in~$\IC[\tensor^m \IC^n]_d$.

\section{Efficient Generators for Invariant Rings}
In~\cite{Mul:17}, Mulmuley proposes the study of the computational complexity of invariant rings.

\begin{definition}
Let $W$ be an $\SL_n$-representation of dimension $N$ and let $C=C(x_1,\ldots,x_N,y_1,\ldots,y_M)$ be an arithmetic circuit.
We say that $C$ \emph{succinctly encodes the generators of the invariant ring of $W$} if the set of polynomials $\{C(x_1,\ldots,x_N,\alpha_1,\ldots,\alpha_M) \mid \forall i: \alpha_i \in \IC\}$ are $\SL_n$-invariants that generate $\IC[W]^{\SL_n}$ as a ring.
\end{definition}

\noindent
Mulmuley states the following conjecture (see~\cite[Conj.~5.3]{Mul:17}, and also the rephrasing~\cite[Conj.~1.4]{GMOW:19}).

\begin{conjecture}\label{conj:mulmuley}
Let $W=(W_1,W_2,\ldots)$ be a sequence of representations, with $W_i$ an $\SL_{n_i}$-representation of dimension~$N_i$.
Then there exists a polynomial $p$ and a sequence $C_i$ of arithmetic circuits of size $\leq p(n_i N_i)$ such that $C_i$ succinctly encodes the generators of the invariant ring of $W_i$.
\end{conjecture}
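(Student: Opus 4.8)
The plan is to \emph{refute} \Cref{conj:mulmuley}: I will show that the conjecture, combined with two facts about Barvinok's hyperpfaffian that are the substance of this paper, would force $\VP=\VNP$. Take the representations to be $W_i=\tensor^{m_i}\IC^{n_i}$ for a sequence of formats on which the hyperpfaffian $\Pf$ is defined, chosen so that the hyperpfaffian family $(\Pf_i)_i$ is a p-family. From a putative sequence of poly-size circuits $C_i$ succinctly encoding the generators of $\IC[W_i]^{\SL_{n_i}}$ I will extract poly-size circuits for the $\Pf_i$, contradicting $\VNP$-completeness of $(\Pf_i)_i$.

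First ingredient: $\Pf$ is, up to scalar, the \emph{unique} $\SL$-invariant of smallest positive degree $d_0$ on $\tensor^m\IC^n$; equivalently, $\IC[\tensor^m\IC^n]_d^{\SL_n}=0$ for $0<d<d_0$ and this space is one-dimensional for $d=d_0$. I would establish this by a direct dimension count for the invariants in low degree --- e.g.\ via the Cayley $\Omega$-process or a plethysm/character computation in $\bigoplus_d\Sym^d(\tensor^m\IC^n)$ --- which pins down exactly where and with what multiplicity the first invariant occurs. Granting this, suppose $C_i$ has size $\le p(n_iN_i)$ and its specializations generate $\IC[W_i]^{\SL_{n_i}}$. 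Since this ring has no nonconstant elements in degrees $1,\dots,d_0-1$, writing the nonconstant homogeneous invariant $\Pf_i$ as a polynomial in the generators and extracting the degree-$d_0$ part shows that some specialization $C_i(x,\alpha)$ has nonzero degree-$d_0$ homogeneous component, which by uniqueness equals $\mu\Pf_i$ with $\mu\neq0$. Substituting the constants $\alpha_j$ into the input gates of $C_i$ is free, and extracting the degree-$d_0$ homogeneous component costs only a factor $O(d_0^2)$ in circuit size; since $n_i$, $N_i$, $d_0$ are all polynomially bounded in $i$, this places $(\Pf_i)_i$ in $\VP$.

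Second ingredient: $(\Pf_i)_i$ is $\VNP$-complete. For membership in $\VNP$ I would invoke Valiant's criterion: the hyperpfaffian is a signed sum over a combinatorial family (set partitions of the index set into blocks, the higher-order analogue of the perfect matchings summed by the ordinary Pfaffian), so one only has to check that the resulting coefficient function is efficiently computable, handling the signs as in the Pfaffian case. The harder half is $\VNP$-hardness, i.e.\ exhibiting $\per_n$ as a projection of $\Pf_{c_n}$ with $(c_n)_n$ polynomially bounded. One must be careful here, because for order-$2$ tensors the Pfaffian is actually in $\VP$ and hence \emph{not} hard; the reduction therefore has to genuinely use tensor order $\ge3$, building a gadget substitution that collapses the hyperpfaffian's signed block-sum onto the unsigned permanent. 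I expect this projection to be the main obstacle of the whole argument.

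Putting the pieces together: under \Cref{conj:mulmuley} we would get $(\Pf_i)_i\in\VP$ while $(\Pf_i)_i$ is $\VNP$-complete, hence $\VP=\VNP$; as this contradicts Valiant's conjecture, \Cref{conj:mulmuley} is false (modulo $\VP\neq\VNP$). Besides the permanent-to-hyperpfaffian projection, the other delicate point is the exact-degree uniqueness statement --- it is the one-dimensionality of the first invariant space, not merely its nonvanishing, that lets a single circuit specialization recover $\Pf_i$.
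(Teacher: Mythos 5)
Your overall strategy is exactly the one the paper follows (and that it borrows from the cited prior works on symmetric groups and products of $\SL$'s): choose $W_i=\tensor^{2k}\IC^{n_i}$, use uniqueness of the minimal-degree invariant to extract the hyperpfaffian from any succinct generator circuit by homogeneous-component extraction, and conclude $\VP=\VNP$ from $\VNP$-completeness. The "conjecture $\Rightarrow$ small circuits for $\Pf$" part of your write-up is sound and correctly spelled out: since there are no nonconstant invariants of degree below $d_0=n/(2k)$, products of two or more nonconstant generators have degree at least $2d_0$, so some specialization $C_i(x,\alpha)$ must have a nonzero degree-$d_0$ component, which by one-dimensionality is a nonzero multiple of $\Pf_{k,n}$, and component extraction costs only a polynomial factor. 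However, the two substantive ingredients of the refutation are only announced, not proved, and the second of them is precisely the main theorem of the paper.

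Concretely: (i) for the uniqueness statement (\cref{pro:nolowerdegreeinv}) you gesture at a Cayley $\Omega$-process or plethysm count without carrying it out; the paper settles it in a few lines via Schur--Weyl duality: $\tensor^m\IC^n$ has $\SL_n$-invariants only when $n\mid m$, for $m=n$ the invariant is unique and equals $e_1\wedge\cdots\wedge e_n$, and this tensor lies in $\Sym^d\tensor^{2k}\IC^n$ because the $d$ blocks have even size $2k$, so permuting blocks introduces no sign. (ii) For $\VNP$-hardness you explicitly leave the projection of the permanent onto the hyperpfaffian as ``the main obstacle,'' which means your argument is a plan, not a proof. The paper's content here is an explicit substitution $p=\sum_{i,j=0}^{d-1}x_{i+1,j+1}\,e_{1+2ki}\otimes\cdots\otimes e_{k+2ki}\otimes e_{k+1+2kj}\otimes\cdots\otimes e_{2k+2kj}$, for which $\langle e_1\wedge\cdots\wedge e_n,\,p^{\otimes d}\rangle$ collapses to a sum over pairs of permutations weighted by $\sgn(i)^k\sgn(j)^k$, giving $d!\,\per_d$ exactly when $k$ is \emph{even}; for odd $k$ the same substitution yields the determinant. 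So the relevant condition is not ``tensor order $\ge 3$'' as you suggest (order $6$, i.e.\ $k=3$, would again give signs), but order divisible by $4$, e.g.\ $k=2$ as in the paper. Without this even-$k$ gadget, the hardness half --- and hence the refutation of \cref{conj:mulmuley} --- is missing.
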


\noindent
The following recent works make partial progress on \cref{conj:mulmuley} under the assumption that $\VP\neq\VNP$:

\begin{itemize}
\item In~\cite{BIJL:18}, $\SL_n$ was replaced with a symmetric group, and the conjecture was disproved.
 The conjecture is not explicitly mentioned there though.
\item In~\cite{GMOW:19}, $\SL_n$ was replaced by a product of special linear groups, and the conjecture was disproved.
\end{itemize}
The original conjecture for $\SL_n$ remained open and Oliveira stated its resolution as an open problem at the Reunion Workshop for the Lower Bounds in Complexity Theory program at the Simons Institute for the Theory of Computing, Berkeley, CA, on Dec.~11\textsuperscript{th}, 2019.

Here we disprove the original conjecture under the assumption that $\VP\neq\VNP$.
We use the same proof technique as in~\cite{BIJL:18} and~\cite{GMOW:19}:
Namely, it suffices to construct a sequence of representations such that each has a unique invariant of lowest degree and show that this family of invariants is $\VNP$-complete.
In the next section we achieve this for the $\SL_n$-representations $\tensor^{4\ell}\IC^n$ for any fixed $\ell$, which finishes the proof.

\section{Hyperpfaffians}
For even $n$, the \emph{Pfaffian} is the unique (up to scale) homogeneous $\SL_n$-invariant of degree $n/2$ on~$\IC^n\otimes\IC^n$.
There are no $\SL_n$-invariants in lower degrees.
If we identify $\IC^n \otimes \IC^n$ with the space of $n\times n$ matrices then the Pfaffian is invariant under the action of $\SL_n$ given by $g \cdot A := gAg^t$.
The defining property of the Pfaffian generalizes to tensors of even order as follows (the classical Pfaffian is the special case of $k=1$):

\begin{proposition}\label{pro:nolowerdegreeinv}
For any $k$ and any $n$ divisible by $2k$, there is a unique (up to scale) homogeneous $\SL_n$-invariant polynomial~$\Pf_{k,n}$ of degree~$\frac{n}{2k}$ on $\tensor^{2k} \IC^n$.
$\Pf_{k,n}$ identifies with the symmetric tensor $e_1 \wedge \cdots \wedge e_n \in \Sym^d \tensor^{2k}\IC^n$.
There are no non-constant $\SL_n$-invariants in lower degrees.
\end{proposition}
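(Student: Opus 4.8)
The plan is to decompose the coordinate ring $\IC[\tensor^{2k}\IC^n]$ into irreducible $\GL_n$-representations and read off the $\SL_n$-invariants of each degree. First I would recall that for a $\GL_n$-representation $W = \tensor^{2k}\IC^n$, the degree-$d$ part of the coordinate ring is $\IC[W]_d \cong \Sym^d(W^*)$, and since $W \cong W^*$ as $\GL_n$-modules (the standard module is self-dual up to a determinant twist, but for computing $\SL_n$-invariants only the underlying $\SL_n$-structure matters) it suffices to understand $\Sym^d(\tensor^{2k}\IC^n)$ as a $\GL_n$-module. A partition $\la$ with at most $n$ parts contributes an $\SL_n$-invariant exactly when the Schur module $S_\la(\IC^n)$ is one-dimensional as an $\SL_n$-module, i.e.\ when $\la$ is a rectangle of the form $(m^n) = (m,m,\ldots,m)$ with $n$ rows; such a $\la$ has $|\la| = mn$, so an invariant of degree $d$ forces $2kd = mn$ for the relevant $m$. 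For the lowest possible degree we want the smallest $m$, namely $m=1$, giving $\la = (1^n)$ and $d = n/(2k)$; I would then need two things: (i) that $S_{(1^n)}(\IC^n) = \bigwedge^n \IC^n$ actually occurs in $\Sym^{n/(2k)}(\tensor^{2k}\IC^n)$ with multiplicity exactly one, and (ii) that no rectangular $\la$ with $m \geq 1$ occurs in any $\Sym^d$ with $d < n/(2k)$ — but this is automatic since $mn = 2kd \geq n > 2kd$ is contradictory, so there are no invariants at all below degree $n/(2k)$, proving the last sentence of the proposition.

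For the multiplicity-one statement (i), the cleanest route is the explicit tensor $e_1 \wedge \cdots \wedge e_n$. Concretely, inside $\tensor^n(\tensor^{2k}\IC^n)$ consider the element obtained by antisymmetrizing $e_1 \otimes \cdots \otimes e_n$ in the $n$ "outer" slots while the $2k$ "inner" slots of each factor carry a fixed pattern; one then symmetrizes over the outer factors to land in $\Sym^n$. I would exhibit such a tensor explicitly — e.g.\ take the inner multilinear pairing to be a product of $2k$ sign-twisted copies of the standard volume form split across the factors — verify it is nonzero and $\SL_n$-invariant (invariance is immediate because it is built from $\det$, equivalently from $\bigwedge^n$), and then invoke the Schur–Weyl / highest-weight count to see the ambient multiplicity of $S_{(1^n)}$ in $\Sym^n(\tensor^{2k}\IC^n)$ is one. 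The multiplicity can be computed via plethysm: the multiplicity of $S_\la$ in $\Sym^d(\tensor^{2k}\IC^n)$ is a Kronecker/plethysm coefficient, and for $\la = (1^n)$, $d = n/(2k)$ it equals the number of ways to fill a certain tableau, which one checks is exactly one (for $k=1$ this is the classical fact that $\bigwedge^n\IC^n$ appears once in $\Sym^{n/2}(\IC^n\otimes\IC^n)$, recovering the Pfaffian).

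The main obstacle I anticipate is step (i)'s uniqueness — showing the multiplicity is exactly one rather than merely positive. Positivity is easy (write down $e_1\wedge\cdots\wedge e_n$), and the non-existence of lower-degree invariants is a one-line inequality, but pinning the multiplicity to $1$ requires either a genuine plethysm computation or a slick representation-theoretic argument. One clean way is a dimension/weight argument: an $\SL_n$-invariant of degree $d=n/(2k)$ in $\IC[\tensor^{2k}\IC^n]$ corresponds to a $\GL_n$-equivariant map $\bigwedge^n\IC^n \to \Sym^d(\tensor^{2k}\IC^n)$, and by the Cauchy-type expansion of $\Sym^d$ of a tensor product together with the pigeonhole constraint $2kd = n$ forcing every inner partition in the decomposition to be a single box, the space of such maps is computed by a single Littlewood–Richardson coefficient $c^{(1^n)}_{\mu^{(1)}\cdots\mu^{(d)}}$ with all $\mu^{(i)}$ rectangles of size $2k$ and total size $n$; one shows this is $1$. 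Once multiplicity one is established, the proposition follows, and the resulting invariant is by definition $\Pf_{k,n}$, realized as the symmetric tensor $e_1\wedge\cdots\wedge e_n \in \Sym^d\tensor^{2k}\IC^n$ as claimed.
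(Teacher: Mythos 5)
Your skeleton (no invariants below degree $n/(2k)$ by a degree count, existence via the explicit tensor $e_1\wedge\cdots\wedge e_n$, uniqueness via a multiplicity-one statement) is the right one, and your lower-degree argument is essentially the paper's. The gap is in the crux, the multiplicity-one step. The decomposition of $\Sym^d(\tensor^{2k}\IC^n)$ as a $\GL_n$-module is governed by plethysm/Kronecker data, not by ``a single Littlewood--Richardson coefficient'': applying the multi-factor Cauchy formula to $\Sym^d(\IC^n\otimes\cdots\otimes\IC^n)$ for the group $\GL_n^{\times 2k}$ and then restricting to the diagonal gives, for the multiplicity of $\{(1^n)\}$, a sum $\sum g(\la^{(1)},\ldots,\la^{(2k)})\,c^{(1^n)}_{\la^{(1)},\ldots,\la^{(2k)}}$ over partitions $\la^{(i)}\vdash d$, where $g$ denotes a Kronecker coefficient. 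The multi-LR factor forces each $\la^{(i)}$ to be the column $(1^d)$ (not a ``rectangle of size $2k$'', and the ``every inner partition is a single box'' step does not parse), and what remains is $g((1^d),\ldots,(1^d))$, the dimension of the $\aS_d$-invariants in $\mathrm{sign}^{\otimes 2k}$ --- which equals $1$ precisely because $2k$ is even. This parity is the decisive point your proposal never uses: an argument blind to it cannot be right, since for tensors of odd order the identical reasoning would ``prove'' multiplicity one while the true multiplicity is zero (the symmetrization of $e_1\wedge\cdots\wedge e_n$ over the $d$ blocks vanishes when the blocks have odd size). For the same reason your existence step is incomplete: you write ``verify it is nonzero,'' but that verification is exactly the parity argument you omit.

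Note also that uniqueness is available far more cheaply than by any plethysm computation, and this is what the paper does: $\Sym^d\tensor^{2k}\IC^n$ is a subspace of $\tensor^d\tensor^{2k}\IC^n\simeq\tensor^n\IC^n$, and by Schur--Weyl duality the space of $\SL_n$-invariant vectors in $\tensor^n\IC^n$ is already one-dimensional (the multiplicity space is $[1^n]$, the sign representation of $\aS_n$), spanned by $e_1\wedge\cdots\wedge e_n$; hence the subspace contains at most one invariant up to scale, with no Littlewood--Richardson or Kronecker coefficients needed. Existence then reduces to checking that this ambient invariant actually lies in $\Sym^d\tensor^{2k}\IC^n$, i.e.\ is symmetric under permuting the $d$ blocks of size $2k$, which holds by graded-commutativity of the wedge product because each block has even size --- the single place where evenness enters. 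If you repair your multiplicity count as indicated (columns plus the Kronecker factor), your route does work, but the embedding into $\tensor^n\IC^n$ makes the plethysm detour unnecessary.
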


\noindent
Before proving \cref{pro:nolowerdegreeinv} we recall some representation theory.
A \emph{partition} $\la$ is a nonincreasing sequence of natural numbers with finite support.
We write $\la \vdash_n m$ to say that $|\la| := \sum_i \la_i = m$ and $\la_{n+1}=0$.
If $\la_{n+1}=0$, then we say that $\la$ is an $n$-partition.
The irreducible polynomial $\GL_n$-representations are indexed by $n$-partitions.
For a partition $\la$ let $\{\la\}$ denote the irreducible $\GL_n$-representation corresponding to~$\la$.
The representation $\{\la\}$ is trivial as an $\SL_n$-representation iff $\la_1=\ldots=\la_n$; note that this implies that $n \mid m$.
The irreducible representations of $\aS_m$ are indexed by partitions $\la$ with $|\la|=m$.
Let $[\la]$ denote the irreducible $\aS_m$-representation corresponding to~$\la$.
Schur-Weyl duality states that
\[ 
\tensor^m \IC^n = \bigoplus_{\la \vdash_n m} \{\la\}\otimes[\la].
\] 
Thus $\tensor^m \IC^n$ contains $\SL_n$-invariant vectors if and only if $n \mid m$.
For $m=n$, there is a unique (up to scale) $\SL_n$-invariant vector, since~$[1^n]$ is the one-dimensional sign representation of~$\aS_n$.
This vector is given by~$e_1 \wedge e_2 \wedge \cdots \wedge e_n$, where $a \wedge b := \frac 1 2 (a \otimes b - b \otimes a)$, and higher order wedge products are defined analogously.

\begin{proof}[Proof of \cref{pro:nolowerdegreeinv}]
It suffices to show that $\Sym^d \tensor^{2k} \IC^n$ contains no $\SL_n$-invariant vector if $0<d<\frac{n}{2k}$ and that it contains a unique such vector if $d=\frac{n}{2k}$.
Note that $\Sym^d \tensor^{2k} \IC^n$ is a subspace of $\tensor^d \tensor^{2k} \IC^n \simeq \tensor^{2kd} \IC^n$.
Thus the first claim holds since $\tensor^m \IC^n$ contains $\SL_n$-invariant vectors only if $n \mid m$, but $0<2kd<n$ if $0<d<\frac{n}{2k}$.
For~$d=\frac{n}{2k}$, $\tensor^d \tensor^{2k} \IC^n \simeq \tensor^n \IC^n$ contains the unique $\SL_n$-invariant vector $v = (e_1 \wedge \cdots \wedge e_{2k}) \wedge \cdots \wedge (e_{2k(d-1)} \wedge \cdots \wedge e_{2kd})$.
It remains to show that $v$ is symmetric, i.e., an element of~$\Sym^d \tensor^{2k} \IC^n$.
This holds since each of the~$d$ blocks has even size~$2k$ and the wedge product is graded-commutative.
This proves the second claim.
\end{proof}

\noindent
The polynomial~$\Pf_{k,n}$ was introduced in~\cite[Def.~3.4]{Bar:95} in its monomial presentation, where it is called the \emph{hyperpfaffian}.
Note that, for fixed $k$, $\Pf_k := (\Pf_{k,2k},\Pf_{k,4k},\dots)$ is a p-family, since~$\Pf_{k,n}$ has degree~$\frac{n}{2k}$ and $n^{2k}$ variables.
The monomial presentation in~\cite{Bar:95} immediately yields that $\Pf_k \in \VNP$.
\begin{theorem}
For even $k$, $\Pf_k$ is \VNP-complete.
\end{theorem}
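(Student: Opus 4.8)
\noindent
The plan is to prove $\VNP$-hardness of $\Pf_k$ directly by exhibiting a projection from the permanent; since $\Pf_k\in\VNP$ was already noted, this yields $\VNP$-completeness. Concretely, it suffices to show that for each~$m$ the polynomial $\per_m$ is a projection of $\Pf_{k,2km}$: we have $2k\mid 2km$, the sequence $m\mapsto 2km$ is polynomially bounded, and $\Pf_k=(\Pf_{k,2k},\Pf_{k,4k},\dots)$, so taking $c_m:=m$ meets the definition of $\VNP$-completeness.

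To build the projection I would fix $N=2km$ and index the standard basis of $\IC^N$ by symbols $r_i^1,\dots,r_i^k$ and $c_i^1,\dots,c_i^k$ for $i\in\{1,\dots,m\}$ --- the ``$k$ copies of row $i$'' and the ``$k$ copies of column $i$'' --- ordered so that every $r$-symbol precedes every $c$-symbol and $r_i^a<r_i^{a+1}$, $c_i^a<c_i^{a+1}$. For $i,j\in\{1,\dots,m\}$ set $S_{i,j}:=\{r_i^1,\dots,r_i^k,c_j^1,\dots,c_j^k\}$, a $2k$-element set whose sorted listing is $r_i^1,\dots,r_i^k,c_j^1,\dots,c_j^k$ for every~$j$. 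Now define the antisymmetric tensor $A=A(x)\in\tensor^{2k}\IC^N$ by declaring its entry on the sorted listing of $S_{i,j}$ to be the variable $x_{i,j}$, and every entry whose underlying index set is not of the form $S_{i,j}$ to be $0$. Every one of the $N^{2k}$ entries of $A$ is then $0$ or $\pm x_{i,j}$, hence affine linear in the $x_{i,j}$, so $A$ is an admissible substitution for a projection.

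Next I would evaluate $\Pf_{k,N}$ on $A$. By \cref{pro:nolowerdegreeinv} and the tensor/polynomial dictionary of \cref{sec:invariants}, $\Pf_{k,N}$ is a nonzero scalar multiple of $w\mapsto\langle e_1\wedge\cdots\wedge e_N,\,w^{\otimes m}\rangle$; expanding $w^{\otimes m}$ and pairing against $e_1\wedge\cdots\wedge e_N$ writes $\Pf_{k,N}(A)$ as a signed sum over all partitions of $[N]$ into $2k$-element blocks, each partition $\{P_1,\dots,P_m\}$ contributing $c_0\cdot\epsilon(\{P_1,\dots,P_m\})\cdot A_{P_1}\cdots A_{P_m}$, where $c_0\neq 0$ is a fixed constant (easily computed explicitly) and $\epsilon\in\{\pm 1\}$ is the sign of the permutation of $[N]$ obtained by writing out the blocks, ordered by least element, each in its sorted order; this sign is well defined because all blocks have even size~$2k$. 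Since $A$ vanishes outside the sets $S_{i,j}$, and since all $k$ copies of any fixed row or column symbol must lie in a single block, the only partitions that contribute are $\mathcal P_\pi:=\{S_{i,\pi(i)}:i\in[m]\}$ for $\pi\in\aS_m$, with $A_{S_{1,\pi(1)}}\cdots A_{S_{m,\pi(m)}}=\prod_i x_{i,\pi(i)}$. Hence
\[
\Pf_{k,N}(A)=c_0\sum_{\pi\in\aS_m}\epsilon(\pi)\prod_{i=1}^m x_{i,\pi(i)},\qquad \epsilon(\pi):=\epsilon(\mathcal P_\pi)\in\{\pm 1\}.
\]

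The crux --- and the step I expect to require the most careful bookkeeping --- is to pin down the signs $\epsilon(\pi)$. Because the least element of $S_{i,\pi(i)}$ is $r_i^1$, the permutation of $[N]$ defining $\epsilon(\mathcal P_\pi)$ is the concatenation over $i=1,\dots,m$ of the words $r_i^1,\dots,r_i^k,c_{\pi(i)}^1,\dots,c_{\pi(i)}^k$. Replacing $\pi$ by $\pi\cdot(i_1\,i_2)$ changes this word only by interchanging the $k$ values $c_{\pi(i_1)}^1,\dots,c_{\pi(i_1)}^k$ with $c_{\pi(i_2)}^1,\dots,c_{\pi(i_2)}^k$; no extra sign arises within the blocks (each $S_{i,j}$ still appears in sorted order), and the interchange itself is a product of $k$ disjoint transpositions of $[N]$, so it multiplies $\epsilon$ by $(-1)^k$. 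Iterating along any word for $\pi$ gives $\epsilon(\pi)=\sgn(\pi)^k\,\epsilon(\mathrm{id})$. Here the hypothesis that $k$ is even is precisely what makes $\sgn(\pi)^k\equiv 1$, so $\epsilon$ equals the constant $\epsilon(\mathrm{id})$ and $\Pf_{k,N}(A)=c_0\,\epsilon(\mathrm{id})\cdot\per_m(x)$ with $c_0\,\epsilon(\mathrm{id})\neq 0$. Finally, substituting $\lambda x_{i,j}$ for each $x_{i,j}$, where $\lambda\in\IC$ is a fixed $m$-th root of $1/(c_0\epsilon(\mathrm{id}))$, turns the right-hand side into $\per_m(x)$ exactly (using that $\per_m$ is homogeneous of degree~$m$), exhibiting $\per_m$ as a projection of $\Pf_{k,2km}$ and completing the proof. (For odd $k$ the identical computation would produce $\det_m$ instead of $\per_m$, consistent with $\Pf_1$ being computable in polynomial time; this is exactly why the theorem requires $k$ even.)
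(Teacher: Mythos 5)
Your proof is correct, and its engine is the same as the paper's: split the $N=2km$ basis vectors into $m$ row-groups and $m$ column-groups of size $k$, attach the variable $x_{i,j}$ to a slot that couples row-group $i$ with column-group $j$, and observe that exchanging two such groups costs a sign $(-1)^k$, which is trivial for even $k$, so the permanent rather than the determinant survives. Where you differ is in execution. The paper substitutes the \emph{non}-antisymmetrized point $p=\sum_{i,j}x_{i+1,j+1}\,e_{1+2ki}\otimes\cdots\otimes e_{k+2ki}\otimes e_{k+1+2kj}\otimes\cdots\otimes e_{2k+2kj}$ and computes $\langle e_1\wedge\cdots\wedge e_N,\,p^{\otimes m}\rangle$ directly from the single fact that this wedge pairs nontrivially only with permutation basis tensors; no expansion of the hyperpfaffian itself is needed for the hardness step. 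You instead substitute the antisymmetric tensor $A$ and invoke the expansion of $\Pf_{k,N}(A)$ as a signed sum over perfect partitions of $[N]$ into $2k$-blocks with a uniform constant $c_0$ --- essentially Barvinok's monomial presentation, which the paper uses only for membership in $\VNP$. That expansion is true (within-block reorderings contribute equally by antisymmetry of $A$, and reordering the blocks preserves the sign because each block has even size $2k$), but it is an extra lemma that you assert rather than verify; the paper's choice of a non-symmetrized $p$ sidesteps it. Your subsequent bookkeeping --- only the partitions $\mathcal{P}_\pi$ survive, a swap of two column-groups is a product of $k$ disjoint transpositions, hence $\epsilon(\pi)=\sgn(\pi)^k\epsilon(\mathrm{id})$ --- matches the paper's $\sgn(i)^k\sgn(j)^k$ computation. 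You are in fact slightly more careful on one point: you remove the nonzero scalar ($d!$ in the paper, $c_0\,\epsilon(\mathrm{id})$ in yours) by rescaling the variables, so that $\per_m$ is literally a projection as the definition requires, whereas the paper settles for ``up to a constant nonzero scalar.''
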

\begin{proof}
For any $d$ and $n = 2kd$, we present a projection of $\Pf_{k,n}$ to the $d\times d$ permanent.
The same projection yields the determinant if $k$ is odd, which explains why the proof does not work for the classical Pfaffian ($k=1$).
The case~$k=2$ is enough to disprove Mulmuley's conjecture.

According to \cref{pro:nolowerdegreeinv}, the Pfaffian $\Pf_{k,n}$ identifies with the symmetric tensor
\begin{align*}
  v := e_1 \wedge \cdots \wedge e_n \in \Sym^d \tensor^{2k}\IC^n.
\end{align*}
Thus, the evaluation~$\Pf_{k,n}(p)$ at a tensor $p \in \tensor^{2k} \IC^n$ is given by $\langle v, p^{\otimes d}\rangle$ (cf.~\cite[Sec.~4.2(A)]{Ike:12}).
We choose
\[
p = \sum_{i,j=0}^{d-1} x_{i+1,j+1} (e_{1+2ki} \otimes e_{2+2ki} \otimes \cdots \otimes e_{k+2ki} \otimes e_{k+1+2kj} \otimes e_{k+2+2kj} \otimes \cdots \otimes e_{2k+2kj}),
\]
where the $x_{i,j}$ ($1 \leq i,j \leq d$) are formal variables.

The point $p$ is parametrized linearly by the $x_{i,j}$, so the evaluation of $\Pf_{k,n}$ at $p$ is a projection of $\Pf_{k,n}$.
We verify that the evaluation of $\Pf_{k,n}$ at $p$ gives the $d \times d$ permanent (up to a constant nonzero scalar) as follows.
\begin{align*}
\quad p^{\otimes d} &=\hspace{-.5cm}\sum_{i_1,j_1,\ldots,i_d,j_d=0}^{d-1}\hspace{-.5cm} x_{i_1+1,j_1+1} \cdots x_{i_d+1,j_d+1}
(e_{1+2ki_1} \otimes e_{2+2ki_1} \otimes \cdots \otimes e_{k+2ki_1} \otimes e_{k+1+2kj_1} \otimes e_{k+2+2kj_1} \otimes \cdots \otimes e_{2k+2kj_1})\\ &\quad\quad\quad\quad\quad\quad\quad\quad\otimes \cdots \otimes
(e_{1+2ki_d} \otimes e_{2+2ki_d} \otimes \cdots \otimes e_{k+2ki_d} \otimes e_{k+1+2kj_d} \otimes e_{k+2+2kj_d} \otimes \cdots \otimes e_{2k+2kj_d})
\end{align*}
and by linearity
\begin{align*}
\quad \langle v,p^{\otimes d}\rangle &= \hspace{-.5cm}\sum_{i_1,j_1,\ldots,i_d,j_d=0}^{d-1}\hspace{-.5cm} x_{i_1+1,j_1+1} \cdots x_{i_d+1,j_d+1}
\langle v,(e_{1+2ki_1} \otimes e_{2+2ki_1} \otimes \cdots \otimes e_{k+2ki_1} \otimes e_{k+1+2kj_1} \otimes e_{k+2+2kj_1} \otimes \cdots \otimes e_{2k+2kj_1})\\ &\quad\quad\quad\quad\quad\quad\quad\quad\otimes \cdots \otimes
(e_{1+2ki_d} \otimes e_{2+2ki_d} \otimes \cdots \otimes e_{k+2ki_d} \otimes e_{k+1+2kj_d} \otimes e_{k+2+2kj_d} \otimes \cdots \otimes e_{2k+2kj_d})\rangle
\end{align*}
A crucial property of $v$ is that $\langle v,e_{\pi(1)} \otimes e_{\pi(2)} \otimes \cdots \otimes e_{\pi(n)}\rangle \neq 0$ iff $\pi$ is a permutation, in which case it is equal to the sign of the permutation.
It follows that the nonzero summands in $\langle v,p^{\otimes d}\rangle$ are precisely those for which~$i=(i_1,\ldots,i_d)$ and $j=(j_1,\ldots,j_d)$ are permutations of $\{0,\ldots,d-1\}$.
For a single summand with $i$ and $j$ permutations we see:
\begin{align*}
&\quad x_{i_1+1,j_1+1} \cdots x_{i_d+1,j_d+1}
\langle v,(e_{1+2ki_1} \otimes e_{2+2ki_1} \otimes \cdots \otimes e_{k+2ki_1} \otimes e_{k+1+2kj_1} \otimes e_{k+2+2kj_1} \otimes \cdots \otimes e_{2k+2kj_1})\\ &\quad\quad\quad\quad\quad\quad\quad\quad\otimes \cdots \otimes
(e_{1+2ki_d} \otimes e_{2+2ki_d} \otimes \cdots \otimes e_{k+2ki_d} \otimes e_{k+1+2kj_d} \otimes e_{k+2+2kj_d} \otimes \cdots \otimes e_{2k+2kj_d})\rangle
\\&= \sgn(i)^k \sgn(j)^k x_{i_1+1,j_1+1} \cdots x_{i_d+1,j_d+1}.
\end{align*}
Hence, for even $k$ we obtain $\langle v,p^{\otimes d}\rangle = d! \per_d$.
\end{proof}

\bibliographystyle{alpha}
\bibliography{lit}

\end{document}